\newcommand{\ZZ}{\mathbb{Z}}
\newcommand{\FF}{\mathbb{F}}
\newtheorem{theorem}{Theorem}[section]
\newtheorem{lemma}[theorem]{Lemma}
\newtheorem{proposition}[theorem]{Proposition}
\newtheorem{corollary}[theorem]{Corollary}
\theoremstyle{definition}
\newtheorem{remark}[theorem]{Remark}
\newcommand{\vct}[1]{\mathbf{#1}}
\newcommand{\nth}{^{\text{th}}}
\DeclareMathOperator{\moddec}{mod}
\renewcommand{\mod}[1]{\,(\moddec #1)}
\newcommand{\clabel}[2]{\protected@write \@auxout {}{\string \newlabel {#1}{{#2}{\thepage}{#2}{#1}{}} }\hypertarget{#1}{}}
\newcommand{\lb}{\allowbreak}
\newcommand{\breaklist}[2][,\lb]{\def\nextitem{\def\nextitem{#1}}\renewcommand*{\do}[1]{\nextitem{##1}}\docsvlist{#2}}
\DeclareMathOperator{\cirdec}{circ}		
\newcommand{\cir}[1]{\cirdec(\breaklist{#1})}
\DeclareMathOperator{\rankdec}{rank}
\newcommand{\rank}[1]{\rankdec(\breaklist{#1})}	
\newcommand{\floor}[1]{\lfloor#1\rfloor}
\DeclareMathOperator{\autdec}{Aut}
\newcommand{\aut}[1]{\autdec(#1)}
\newcommand{\vctg}[1]{\boldsymbol{#1}}
\renewcommand*\env@matrix[1][*\c@MaxMatrixCols c]{\hskip -\arraycolsep\let\@ifnextchar\new@ifnextchar\array{#1}}
\newcommand{\pcir}[2]{\cirdec_{#1}(\breaklist{#2})}
\providecommand{\keywords}[1]{\small\textit{Keywords}: #1}
\providecommand{\msc}[1]{\small\textit{2020 MSC}: #1}
\title{New binary self-dual codes of lengths 56, 62, 78, 92 and 94 from a bordered construction}
\author{J. Gildea, A. Korban and A. M. Roberts\\
Department of Physical, Mathematical and Engineering Sciences\\
University of Chester\\
Exton Park\\
Chester CH1 4AR\\
United Kingdom\\
{}\\
A. Tylyshchak\\
Department of Algebra\\
Uzhgorod National University\\
Uzhgorod\\
Ukraine
}
\date{}
\begin{document}

\maketitle

\keywords{Binary self-dual codes, Bordered constructions, Gray maps, Extremal codes, Best known codes}

\msc{94B05, 15B10, 15B33}

\let\thefootnote\relax\footnote{E-mail addresses: \href{mailto:j.gildea@chester.ac.uk}{j.gildea@chester.ac.uk} (J. Gildea), \href{mailto:adrian3@windowslive.com}{adrian3@windowslive.com} (A. Korban), \href{mailto:adammichaelroberts@outlook.com}{adammichaelroberts@outlook.com} (A. M. Roberts),
\href{mailto:}{alxtlk@bigmir.net} (A. Tylyshchak)
}

\begin{abstract}
In this paper, we present a new bordered construction for self-dual codes which employs $\lambda$-circulant matrices. We give the necessary conditions for our construction to produce self-dual codes over a finite commutative Frobenius ring of characteristic 2. Moreover, using our bordered construction together with the well-known building-up and neighbour methods, we construct many binary self-dual codes of lengths 56, 62, 78, 92 and 94 with parameters in their weight enumerators that were not known in the literature before. 
\end{abstract}

\section{Introduction}

The study of self-dual codes over finite fields and rings is an active area of research in coding theory. This is mainly due to their connections to other areas in mathematics such as combinatorics, design theory and number theory. In \cite{R-218}, it is shown that one can produce interesting designs using self-dual codes over fields. In \cite{R-219}, it is shown that one of the most powerful techniques for producing optimal unimodular lattices uses self-dual codes over rings. Moreover, as seen in \cite{R-220}, the well-known proof of the non-existence of the projective plane of order 10 used the theory of binary self-dual codes.

Finding methods for constructing self-dual codes, classifying self-dual codes and determining the largest minimum distance among all self-dual codes are open problems in coding theory. Many researchers have employed different techniques to construct self-dual codes. For example, in \cite{R-008}, the authors establish a strong connection between group rings and self-dual codes. They prove that a group ring element corresponds to a self-dual code if and only if it is a unitary unit and they use this fact to construct many new extremal binary self-dual codes of length 68. In \cite{R-031,R-193,R-086,R-125,R-137}, the authors employ a particular technique which guarantees that the constructed self-dual codes have a fixed automorphism of odd prime order.

A powerful method for finding new binary self-dual codes is to consider bordered matrix constructions. There exist several papers in which such constructions are used to obtain binary self-dual codes of different lengths with new parameters in their weight enumerators. Examples of bordered constructions can be found in \cite{R-108,R-043,R-098}. In this work, we continue in this direction and we present a new bordered matrix construction that we employ to obtain many binary self-dual codes of lengths 56, 62, 78, 92 and 94 with weight enumerators that were not known in the literature before. Moreover, we present the necessary conditions that the construction needs to satisfy in order to produce a self-dual code over a finite commutative Frobenius ring of characteristic 2.

The paper is organised as follows. In Section 2, we give the standard definitions and results on self-dual codes, the alphabets we use, Gray maps and some special matrices that we use in this work. In Section 3, we present the main construction and show under what conditions it produces self-dual codes over a finite commutative Frobenius ring of characteristic 2. In Section 4, we present our computational results. Namely, we tabulate all new binary self-dual codes that we obtain by a direct application of our main construction together with the building-up and neighbour constructions. We finish with concluding remarks and directions for possible future research.

\section{Preliminaries}

\subsection{Self-Dual Codes}

Let $R$ be a commutative Frobenius ring (see \cite{B-014} for a full description of Frobenius rings and codes over Frobenius rings). Throughout this work, we always assume $R$ has unity. A code $\mathcal{C}$ of length $n$ over $R$ is a subset of $R^n$ whose elements are called codewords. If $\mathcal{C}$ is a submodule of $R^n$, then we say that $\mathcal{C}$ is linear. Let $\mathbf{x},\mathbf{y}\in R^n$ where $\mathbf{x}=(x_1,x_2,\dots,x_n)$ and $\mathbf{y}=(y_1,y_2,\dots,y_n)$. The (Euclidean) dual $\mathcal{C}^{\bot}$ of $\mathcal{C}$ is given by
	\begin{equation*}
	\mathcal{C}^{\bot}=\{\mathbf{x}\in R^n: \langle\mathbf{x},\mathbf{y}\rangle=0,\forall\mathbf{y}\in\mathcal{C}\},
	\end{equation*}	
where $\langle\cdot,\cdot\rangle$ denotes the Euclidean inner product defined by
	\begin{equation*}
	\langle\mathbf{x},\mathbf{y}\rangle=\sum_{i=1}^nx_iy_i
	\end{equation*}
and we say that $\mathcal{C}$ is self-orthogonal if $\mathcal{C}\subseteq \mathcal{C}^\perp$ and self-dual if $\mathcal{C}=\mathcal{C}^{\bot}$.

An upper bound on the minimum (Hamming) distance of a doubly-even (Type II) binary self-dual code was given in \cite{R-116} and likewise for a singly-even (Type I) binary self-dual code in \cite{R-115}. Let $d_{\text{I}}(n)$ and $d_{\text{II}}(n)$ be the minimum distance of a Type I and Type II binary self-dual code of length $n$, respectively. Then
	\begin{equation*}
	d_{\text{II}}(n)\leq 4\floor{n/24}+4
	\end{equation*}
and
	\begin{equation*}
	d_{\text{I}}(n)\leq
	\begin{cases}
	4\floor{n/24}+2,& \text{if }n\equiv 0\pmod{24},\\
	4\floor{n/24}+4,& \text{if }n\not\equiv 22\pmod{24},\\
	4\floor{n/24}+6,& \text{if }n\equiv 22\pmod{24}.
	\end{cases}
	\end{equation*}

A self-dual code whose minimum distance meets its corresponding bound is called \textit{extremal}. A self-dual code with the highest minimum distance for its length is said to be \textit{optimal}. Extremal codes are necessarily optimal but optimal codes are not necessarily extremal. A \textit{best known} self-dual code is a self-dual code with the highest known minimum distance for its length.

\subsection{Alphabets}

In this paper, we consider the alphabets $\FF_2$, $\FF_2+u\FF_2$, $\FF_2+u\FF_2+v\FF_2+uv\FF_2$ and $\FF_4+u\FF_4$.

Define
	\begin{equation*}
	\FF_2+u\FF_2=\{a+bu:a,b\in\FF_2,u^2=0\},
	\end{equation*}
then $\FF_2+u\FF_2$ is a commutative ring of order 4 and characteristic 2 such that $\FF_2+u\FF_2\cong\FF_2[u]/\langle u^2\rangle$. 

Define
	\begin{equation*}
	\FF_2+u\FF_2+v\FF_2+uv\FF_2=\{a+bu+cv+duv:a,b,c,d\in\FF_2,u^2=v^2=uv+vu=0\},
	\end{equation*}
then $\FF_2+u\FF_2+v\FF_2+uv\FF_2$ is a commutative ring of order 16 and characteristic 2 such that $\FF_2+u\FF_2+v\FF_2+uv\FF_2\cong\FF_2[u,v]/\langle u^2,v^2,uv+vu\rangle$. Note that $\FF_2+u\FF_2+v\FF_2+uv\FF_2$ can be viewed as an extension of $\FF_2+u\FF_2$ and so we can also define
	\begin{equation*}
	\FF_2+u\FF_2+v\FF_2+uv\FF_2=\{a+bv: a,b\in \FF_2+u\FF_2,v^2=0\}.
	\end{equation*}

We define $\FF_4\cong\FF_2[\omega]/\langle \omega^2+\omega+1\rangle$ so that
	\begin{equation*}
	\FF_4=\{a{\omega}+b(1+\omega): a,b\in\FF_2,\omega^2+\omega+1=0\}.
	\end{equation*}

Define
	\begin{equation*}
	\FF_4+u\FF_4=\{a+bu: a,b\in\FF_4,u^2=0\},
	\end{equation*}
then $\FF_4+u\FF_4$ is a commutative ring of order 16 and characteristic 2 such that $\FF_4+u\FF_4\cong\FF_4[u]/\langle u^2\rangle\cong\FF_2[\omega,u]/\langle \omega^2+\omega+1,u^2,\omega u+u\omega\rangle$. Note that $\FF_4+u\FF_4$ can be viewed as an extension of $\FF_2+u\FF_2$ and so we can also define
	\begin{equation*}
	\FF_4+u\FF_4=\{a\omega+b(1+\omega): a,b\in \FF_2+u\FF_2,\omega^2+\omega+1=0\}.
	\end{equation*}

We recall the following Gray maps from \cite{R-117,R-118,R-119,R-129}:
	\begin{align*}
	\varphi_{\FF_2+u\FF_2}&:(\FF_2+u\FF_2)^n\to\FF_2^{2n}\\
        &\quad a+bu\mapsto(b,a+b),\,a,b\in\FF_2^{n},\\[6pt]
	\varphi_{\FF_4+u\FF_4}&:(\FF_4+u\FF_4)^n\to\FF_4^{2n}\\
		&\quad a+bu\mapsto(b,a+b),\,a,b\in\FF_4^{n},\\[6pt]
	\phi_{\FF_2+u\FF_2+v\FF_2+uv\FF_2}&:(\FF_2+u\FF_2+v\FF_2+uv\FF_2)^n\to (\FF_2+u\FF_2)^{2n}\\
		&\quad a+bv\mapsto(b,a+b),\,a,b\in (\FF_2+u\FF_2)^{n},\\[6pt]
	\psi_{\FF_4}&:\FF_4^n\to\FF_2^{2n}\\
		&\quad a\omega+b(1+\omega)\mapsto(a,b),\,a,b\in\FF_2^n,\\[6pt]
	\psi_{\FF_4+u\FF_4}&:(\FF_4+u\FF_4)^n\to (\FF_2+u\FF_2)^{2n}\\
		&\quad a\omega+b(1+\omega)\mapsto(a,b),\,a,b\in (\FF_2+u\FF_2)^n
	\end{align*}
and we note that these Gray maps preserve orthogonality in their respective alphabets (see \cite{R-119} for details). If $\mathcal{C}\subseteq (\FF_4+u\FF_4)^n$, then the binary codes $\varphi_{\FF_2+u\FF_2}\circ\psi_{\FF_4+u\FF_4}(\mathcal{C})$ and $\psi_{\FF_4}\circ\varphi_{\FF_4+u\FF_4}(\mathcal{C})$ are equivalent to each
other (see \cite{R-118,R-119} for details). The Lee weight of a codeword is defined to be the Hamming weight of its binary image under any of the previously mentioned compositions of maps. A self-dual code in $R^n$ where $R$ is equipped with a Gray map to the binary Hamming space is said to be of Type II if the Lee weights of all codewords are multiples of 4, otherwise it is said to be of Type I.
	\begin{proposition}\label{proposition-1}\textup{(\cite{R-119})}
	Let $\mathcal{C}$ be a code over $\FF_4+u\FF_4$. If $\mathcal{C}$ is self-orthogonal, then $\psi_{\FF_4+u\FF_4}(\mathcal{C})$ and $\varphi_{\FF_4+u\FF_4}(\mathcal{C})$ are self-orthogonal. The code $\mathcal{C}$ is a Type I (resp. Type II) code over $\FF_4+u\FF_4$ if and only if $\varphi_{\FF_4+u\FF_4}(\mathcal{C})$ is a Type I (resp. Type II) $\FF_4$-code if and only if $\psi_{\FF_4+u\FF_4}(\mathcal{C})$ is a Type I (resp. Type II) $(\FF_2+u\FF_2)$-code. Furthermore, the minimum Lee weight of $\mathcal{C}$ is the same as the minimum Lee weight of $\psi_{\FF_4+u\FF_4}(\mathcal{C})$ and $\varphi_{\FF_4+u\FF_4}(\mathcal{C})$.
	\end{proposition}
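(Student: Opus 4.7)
The plan is to split the proposition into two essentially independent pieces: the preservation of self-orthogonality under each of $\psi_{\FF_4+u\FF_4}$ and $\varphi_{\FF_4+u\FF_4}$, and the preservation of Lee weight, from which both the Type~I/II biconditionals and the equality of minimum weights will follow.

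For orthogonality under $\varphi_{\FF_4+u\FF_4}$, I would pick arbitrary $x=a+bu$ and $y=c+du$ in $(\FF_4+u\FF_4)^n$ with $a,b,c,d\in\FF_4^n$ and, using $u^2=0$, expand
\[
\langle x,y\rangle=\sum_i a_ic_i+\left(\sum_i(a_id_i+b_ic_i)\right)u,
\]
so that $\langle x,y\rangle=0$ forces the simultaneous vanishing of the $\FF_4$-part and the $u$-part. A parallel expansion of $\langle\varphi_{\FF_4+u\FF_4}(x),\varphi_{\FF_4+u\FF_4}(y)\rangle=\sum_i b_id_i+\sum_i(a_i+b_i)(c_i+d_i)$ simplifies in characteristic~2 to $\sum_i a_ic_i+\sum_i(a_id_i+b_ic_i)$, which vanishes whenever the two components above do.

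For $\psi_{\FF_4+u\FF_4}$, I would use the $\FF_2$-basis $\{\omega,1+\omega\}$ of $\FF_4$, writing $x=a\omega+b(1+\omega)$ and $y=c\omega+d(1+\omega)$ with $a,b,c,d\in(\FF_2+u\FF_2)^n$. Exploiting the identities $\omega^2=1+\omega$, $(1+\omega)^2=\omega$ and $\omega(1+\omega)=1=\omega+(1+\omega)$, I would re-expand each coordinatewise product back into the basis to read off the two components of $\langle x,y\rangle$ in $\omega$ and $1+\omega$. Adding those two components collapses the cross terms and leaves exactly $\sum_i(a_ic_i+b_id_i)=\langle\psi_{\FF_4+u\FF_4}(x),\psi_{\FF_4+u\FF_4}(y)\rangle$, so $\langle x,y\rangle=0$ implies $\langle\psi_{\FF_4+u\FF_4}(x),\psi_{\FF_4+u\FF_4}(y)\rangle=0$ as required.

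The Lee-weight statements are then essentially definitional. By definition, the Lee weight of $x\in(\FF_4+u\FF_4)^n$ is the Hamming weight of its binary image under either composition $\psi_{\FF_4}\circ\varphi_{\FF_4+u\FF_4}$ or $\varphi_{\FF_2+u\FF_2}\circ\psi_{\FF_4+u\FF_4}$, and the Lee weight on each intermediate alphabet is, in turn, the Hamming weight of the image under the remaining map. Hence $w_L(x)=w_L(\varphi_{\FF_4+u\FF_4}(x))=w_L(\psi_{\FF_4+u\FF_4}(x))$ for every $x$, and applying this pointwise to $\mathcal{C}$ yields both the Type~I/II equivalences (Type~II being the condition that every Lee weight is divisible by~4) and the coincidence of minimum Lee weights. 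The main obstacle is the bookkeeping in the $\psi_{\FF_4+u\FF_4}$ step, where some care is needed with the multiplication table of $\{\omega,1+\omega\}$ to recognise the two components of $\langle x,y\rangle$ and to see that their sum equals the target inner product; once that is set up, the remaining arguments are routine unpackings of definitions.
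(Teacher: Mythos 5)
Your proof is correct: the basis decompositions with respect to $\{1,u\}$ over $\FF_4$ and $\{\omega,1+\omega\}$ over $\FF_2+u\FF_2$, the characteristic-2 cancellations, and the observation that Lee weight is preserved codeword-by-codeword under both Gray maps all check out. Note that the paper itself gives no proof of this proposition --- it is quoted from the cited reference --- so there is nothing internal to compare against; your direct computation is the standard argument one would expect that reference to contain.
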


The next corollary follows immediately from Proposition \ref{proposition-1}.
	\begin{corollary}
	Let $\mathcal{C}$ be a self-dual code over $\FF_4+u\FF_4$ of length $n$ and minimum Lee distance $d$. Then $\varphi_{\FF_2+u\FF_2}\circ\psi_{\FF_4+u\FF_4}(\mathcal{C})$ is a binary self-dual $[4n,2n,d]$ code. Moreover, the Lee weight enumerator of $\mathcal{C}$ is equal to the Hamming weight enumerator of $\varphi_{\FF_2+u\FF_2} \circ \psi_{\FF_4+u\FF_4}(\mathcal{C})$. If $\mathcal{C}$ is a Type I (resp. Type II) code, then $\varphi_{\FF_2+u\FF_2}\circ\psi_{\FF_4+u\FF_4}(\mathcal{C})$ is a Type I (resp. Type II) code.
	\end{corollary}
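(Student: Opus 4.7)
The plan is to apply the two Gray maps in the composition one at a time, combining Proposition~\ref{proposition-1} with the standard fact that $\varphi_{\FF_2+u\FF_2}$ preserves orthogonality. Starting from self-duality (hence self-orthogonality) of $\mathcal{C}$, Proposition~\ref{proposition-1} immediately yields that $\psi_{\FF_4+u\FF_4}(\mathcal{C})$ is self-orthogonal in $(\FF_2+u\FF_2)^{2n}$. Since $\psi_{\FF_4+u\FF_4}$ is an $\FF_2$-linear bijection onto its image, $|\psi_{\FF_4+u\FF_4}(\mathcal{C})|=|\mathcal{C}|=16^{n/2}=4^n$, which coincides with the cardinality of any self-dual code of length $2n$ over $\FF_2+u\FF_2$; a self-orthogonal code of that size must be self-dual, so $\psi_{\FF_4+u\FF_4}(\mathcal{C})$ is self-dual.

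I would then repeat the same idea one level down. The Gray map $\varphi_{\FF_2+u\FF_2}$ is an $\FF_2$-linear bijection $(\FF_2+u\FF_2)^{2n}\to\FF_2^{4n}$ that preserves orthogonality, so $\varphi_{\FF_2+u\FF_2}\circ\psi_{\FF_4+u\FF_4}(\mathcal{C})$ is a self-orthogonal binary code of cardinality $4^n=2^{2n}$, hence self-dual of length $4n$ and dimension $2n$. The minimum distance claim follows from the last sentence of Proposition~\ref{proposition-1}, which gives that $\mathcal{C}$ and $\psi_{\FF_4+u\FF_4}(\mathcal{C})$ share the same minimum Lee weight; combined with the fact that the Lee weight on $\FF_2+u\FF_2$ is by definition the Hamming weight of the $\varphi_{\FF_2+u\FF_2}$-image, this identifies $d$ as the binary minimum distance. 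Applying the same codeword-by-codeword identification to the whole weight distribution yields the equality between the Lee weight enumerator of $\mathcal{C}$ and the Hamming weight enumerator of its binary image.

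The Type statement is then immediate: Type I and Type II are defined by divisibility of Lee weights by $4$, and Proposition~\ref{proposition-1} transports this condition from $\mathcal{C}$ to $\psi_{\FF_4+u\FF_4}(\mathcal{C})$, while the Lee/Hamming identification above transports it from $\psi_{\FF_4+u\FF_4}(\mathcal{C})$ to $\varphi_{\FF_2+u\FF_2}\circ\psi_{\FF_4+u\FF_4}(\mathcal{C})$. I do not anticipate a genuine obstacle; the only point requiring care is the cardinality bookkeeping needed to upgrade \emph{self-orthogonal} to \emph{self-dual} at each of the two Gray-map stages.
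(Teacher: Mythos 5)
Your proof is correct and follows the same route the paper intends: the paper gives no explicit proof, stating only that the corollary ``follows immediately from Proposition \ref{proposition-1}'', and your argument is precisely the natural unpacking of that claim (applying the two Gray maps in succession, using the cardinality count to upgrade self-orthogonality to self-duality, and reading off the distance, weight enumerator and Type statements from the definition of Lee weight). No discrepancy to report.
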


We also have the following proposition from \cite{R-130}:
	\begin{proposition}\textup{(\cite{R-130})}
	Let $\mathcal{C}$ be a self-dual code over $\FF_2+u\FF_2+v\FF_2+uv\FF_2$ of length $n$ and minimum Lee distance $d$. Then $\varphi_{\FF_2+u\FF_2}\circ\phi_{\FF_2+u\FF_2+v\FF_2+uv\FF_2}(\mathcal{C})$ is a binary $[4n,2n,d]$ self-dual code. Moreover, the Lee weight enumerator of $\mathcal{C}$ is equal to the Hamming weight enumerator of $\varphi_{\FF_2+u\FF_2}\circ\phi_{\FF_2+u\FF_2+v\FF_2+uv\FF_2}(\mathcal{C})$. If $\mathcal{C}$ is a Type I (resp. Type II) code, then $\varphi_{\FF_2+u\FF_2}\circ\phi_{\FF_2+u\FF_2+v\FF_2+uv\FF_2}(\mathcal{C})$ is a Type I (resp. Type II) code.
	\end{proposition}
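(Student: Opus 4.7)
The plan is to follow the same three-step pattern as the preceding corollary: establish self-duality of the binary image via orthogonality-preservation combined with a cardinality count, then read off the distance and weight statements from the definition of the Lee weight.

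First I would fix notation by writing $R_4=\FF_2+u\FF_2+v\FF_2+uv\FF_2$ and $R_2=\FF_2+u\FF_2$ and decomposing any codeword $\mathbf{c}\in\mathcal{C}$ uniquely as $\mathbf{c}=\mathbf{a}+\mathbf{b}v$ with $\mathbf{a},\mathbf{b}\in R_2^n$. Because $v^2=0$ and $\mathrm{char}(R_4)=2$, a direct expansion gives
\[
\langle\mathbf{a}+\mathbf{b}v,\mathbf{a}'+\mathbf{b}'v\rangle_{R_4}=\langle\mathbf{a},\mathbf{a}'\rangle_{R_2}+\bigl(\langle\mathbf{a},\mathbf{b}'\rangle_{R_2}+\langle\mathbf{b},\mathbf{a}'\rangle_{R_2}\bigr)v,
\]
so self-orthogonality of $\mathcal{C}$ forces each $R_2$-summand to vanish. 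A short expansion of $\langle(\mathbf{b},\mathbf{a}+\mathbf{b}),(\mathbf{b}',\mathbf{a}'+\mathbf{b}')\rangle_{R_2}$ then collapses, in characteristic $2$, to $\langle\mathbf{a},\mathbf{a}'\rangle+\langle\mathbf{a},\mathbf{b}'\rangle+\langle\mathbf{b},\mathbf{a}'\rangle$, which is zero by the previous line. Hence $\phi_{R_4}(\mathcal{C})$ is self-orthogonal over $R_2$, and applying the analogous orthogonality-preserving property of $\varphi_{R_2}$ (noted in the Gray-map discussion above) yields that $\varphi_{R_2}\circ\phi_{R_4}(\mathcal{C})$ is self-orthogonal in $\FF_2^{4n}$.

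Next, to upgrade self-orthogonality to self-duality I would run a cardinality count: since $\mathcal{C}$ is self-dual over the Frobenius ring $R_4$, $|\mathcal{C}|=|R_4|^{n/2}=16^{n/2}=2^{2n}$, and the composition $\varphi_{R_2}\circ\phi_{R_4}$ is an $\FF_2$-linear bijection from $R_4^n$ onto $\FF_2^{4n}$, so the image has exactly $2^{2n}$ elements, which is the size of any binary self-dual code of length $4n$. Combined with self-orthogonality this forces the image to coincide with its binary dual, giving a $[4n,2n,d']$ self-dual code. The identity $d'=d$ and the equality of the Lee weight enumerator of $\mathcal{C}$ with the Hamming weight enumerator of the image are then immediate from the very definition of the Lee weight as the Hamming weight of the binary image under this composition.

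Finally, the Type I versus Type II dichotomy is automatic: the condition that every Lee weight in $\mathcal{C}$ is divisible by $4$ is, by the weight-enumerator equality just established, identical to the condition that every Hamming weight in the binary image is divisible by $4$. The step I expect to be the main obstacle is the verification that $\phi_{R_4}$ preserves orthogonality, since this is where the specific form $a+bv\mapsto(b,a+b)$ of the Gray map has to interact with the relations $v^2=0$ and $2=0$ so that the cross terms cancel in just the right way; every subsequent step is essentially bookkeeping.
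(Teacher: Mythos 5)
Your argument is correct, but note that the paper itself offers no proof of this proposition: it is quoted verbatim from \cite{R-130}, so there is nothing internal to compare against. What you have written is the standard argument and it holds up. The key computation — that $\langle\mathbf{a}+\mathbf{b}v,\mathbf{a}'+\mathbf{b}'v\rangle=\langle\mathbf{a},\mathbf{a}'\rangle+(\langle\mathbf{a},\mathbf{b}'\rangle+\langle\mathbf{b},\mathbf{a}'\rangle)v$ forces both components to vanish, and that the image inner product collapses in characteristic $2$ to exactly the sum of those two vanishing components — is the same cancellation the paper alludes to when it says the Gray maps ``preserve orthogonality in their respective alphabets,'' so you are not even obliged to reprove it for $\varphi_{\FF_2+u\FF_2}$; citing that remark suffices there. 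Your cardinality count is also sound ($|\mathcal{C}|^2=|R_4|^n$ gives $|\mathcal{C}|=2^{2n}$, and both Gray maps are additive bijections, so the image is an $\FF_2$-subspace of dimension $2n$), and the distance, weight-enumerator and Type I/II claims do indeed reduce to the definition of Lee weight as the Hamming weight of the binary image. The one step you flag as a potential obstacle is in fact the easiest part and goes through exactly as you anticipate.
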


\subsection{Special Matrices}

We now define and discuss the properties of some special matrices which we use in our work. Let $\vct{a}=(a_0,a_1,\ldots,a_{n-1})\in R^n$ where $R$ is a commutative ring and let
	\begin{equation*}
	A=\begin{pmatrix}
	a_0 & a_1 & a_2 & \cdots & a_{n-1}\\
	\lambda a_{n-1} & a_0 & a_1 & \cdots & a_{n-2}\\
	\lambda a_{n-2} & \lambda a_{n-1} & a_0 & \cdots & a_{n-3}\\
	\vdots & \vdots & \vdots & \ddots & \vdots\\
	\lambda a_1 & \lambda a_2 & \lambda a_3 & \cdots & a_0
	\end{pmatrix},
	\end{equation*}
where $\lambda\in R$. Then $A$ is called the $\lambda$-circulant matrix generated by $\vct{a}$, denoted by $A=\pcir{\lambda}{\vct{a}}$. If $\lambda=1$, then $A$ is called the circulant matrix generated by $\vct{a}$ and is more simply denoted by $A=\cir{\vct{a}}$. If we define the matrix
	\begin{equation*}
	P_{\lambda}=\begin{pmatrix}
	\vct{0} & I_{n-1}\\
	\lambda & \vct{0}
	\end{pmatrix},
	\end{equation*}
then it follows that $A=\sum_{i=0}^{n-1}a_iP_{\lambda}^i$. Clearly, the sum of any two $\lambda$-circulant matrices is also a $\lambda$-circulant matrix. If $B=\pcir{\lambda}{\vct{b}}$ where $\vct{b}=(b_0,b_1,\ldots,b_{n-1})\in R^n$, then $AB=\sum_{i=0}^{n-1}\sum_{j=0}^{n-1}a_ib_jP_{\lambda}^{i+j}$. Since $P_{\lambda}^n=\lambda I_n$ there exist $c_k\in R$ such that $AB=\sum_{k=0}^{n-1}c_kP_{\lambda}^k$ so that $AB$ is also $\lambda$-circulant. In fact, it is true that
	\begin{equation*}
	c_{k}=\sum_{\substack{[i+j]_n=k\\i+j<n}}a_ib_j+\sum_{\substack{[i+j]_n=k\\i+j\geq n}}\lambda a_ib_j=\vct{x}_1\vct{y}_{k+1}
	\end{equation*}
for $k\in\{0,\ldots,n-1\}$, where $\vct{x}_i$ and $\vct{y}_i$ respectively denote the $i\nth$ row and column of $A$ and $B$ and $[i+j]_n$ denotes the smallest non-negative integer such that $[i+j]_n\equiv i+j\mod{n}$. From this, we can see that $\lambda$-circulant matrices commute multiplicatively and in fact the set of $\lambda$-circulant matrices over a commutative ring of fixed size is itself a commutative ring. Moreover, if $\lambda$ is a unit in $R$, then $A^T$ is $\lambda^{-1}$-circulant such that $A^T=a_0I_n+\lambda\sum_{i=1}^{n-1}a_{n-i}P_{\lambda^{-1}}^i$. It follows then that $AA^T$ is $\lambda$-circulant if and only if $\lambda$ is involutory in $R$, i.e. $\lambda^2=1$. 

Let $J_n$ be an $n\times n$ matrix over $R$ whose $(i,j)\nth$ entry is $1$ if $i+j=n+1$ and 0 if otherwise. Then $J_n$ is called the $n\times n$ exchange matrix and corresponds to the row-reversed (or column-reversed) version of $I_n$. We see that $J_n$ is both symmetric and involutory, i.e. $J_n=J_n^T$ and $J_n^2=I_n$. For any matrix $A\in R^{m\times n}$, premultiplying $A$ by $J_m$ and postmultiplying $A$ by $J_n$ inverts the order in which the rows and columns of $A$ appear, respectively. Namely, the $(i,j)\nth$ entries of $J_mA$ and $AJ_n$ are the $([1-i]_m,j)\nth$ and $(i,[1-j]_n)\nth$ entries of $A$, respectively. Note that $[i+j]_n$ corresponds to the $(i+1,j+1)\nth$ entry of the matrix $J_nV$ where $V=\cir{n-1,0,1,\ldots,n-2}$ for $i,j\in\{0,\ldots,n-1\}$.

For a $\lambda$-circulant matrix $A$ of size $n\times n$, it is easy to see that the sum of entries in the $i\nth$ row of $A$ is equal to the sum of entries in the $(n-i+1)\nth$ column of $A$. Thus, for $\vctg{\xi}=(\xi,\xi,\ldots,\xi)\in R^n$, we have $\vctg{\xi}A=\vctg{\xi}A^TJ_n$.

\section{The Construction}
In this section, we present our technique for constructing self-dual codes. We will hereafter always assume $R$ is a finite commutative Frobenius ring of characteristic 2.
	\begin{lemma}\label{lemma-1}
		Let 
		\begin{align*}
			X=\begin{pmatrix}
			    AC & B\\
			    B^TC & A^T
			\end{pmatrix},
		\end{align*}
		where $A=\pcir{\lambda}{\vct{a}}$, $B=\pcir{\lambda}{\vct{b}}$ and $C=\pcir{\mu}{\vct{c}}$ with $\vct{a},\vct{b},\vct{c}\in R^n$ and $\lambda,\mu\in R:\lambda^2=\mu^2=1$. If $CC^T=I_n$, then $XX^T=I_{2n}$ if and only if $AA^T+BB^T=I_n$.
	\end{lemma}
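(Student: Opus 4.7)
The plan is to compute $XX^T$ as a $2\times 2$ block product and show that, under the hypotheses, the off-diagonal blocks vanish while the two diagonal blocks both collapse to $AA^T+BB^T$. First I would write
\begin{equation*}
X^T=\begin{pmatrix} C^TA^T & C^TB\\ B^T & A\end{pmatrix}
\end{equation*}
and expand
\begin{equation*}
XX^T=\begin{pmatrix} AC\,C^TA^T+BB^T & AC\,C^TB+BA\\ B^TC\,C^TA^T+A^TB^T & B^TC\,C^TB+A^TA\end{pmatrix}.
\end{equation*}
The assumption $CC^T=I_n$ immediately eliminates $C$ from every block, leaving
\begin{equation*}
XX^T=\begin{pmatrix} AA^T+BB^T & AB+BA\\ B^TA^T+A^TB^T & B^TB+A^TA\end{pmatrix}.
\end{equation*}

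Next I would use the structural results from the preliminaries. Since $\lambda^2=1$, we have $\lambda^{-1}=\lambda$, so $A^T$ is itself $\lambda$-circulant, and the same for $B^T$. The preliminaries establish that $\lambda$-circulant matrices over a commutative ring form a commutative subring, so in particular $A$, $B$, $A^T$ and $B^T$ all pairwise commute. Combined with $\operatorname{char}(R)=2$, this gives $AB+BA=2AB=0$ and likewise $B^TA^T+A^TB^T=0$, so both off-diagonal blocks vanish. Moreover $A^TA=AA^T$ and $B^TB=BB^T$, so the bottom-right block equals the top-left block.

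At this stage we have $XX^T=(AA^T+BB^T)\oplus(AA^T+BB^T)$, from which $XX^T=I_{2n}$ is equivalent to $AA^T+BB^T=I_n$, proving both directions simultaneously. The main (minor) obstacle is making sure the commutation used in killing the off-diagonal terms is legitimate; this is precisely where the hypotheses $\lambda^2=1$ and $\operatorname{char}(R)=2$ are needed, and both are supplied by the statement, so no further work is required beyond pointing to the commutative ring structure of $\lambda$-circulants already recorded in Section 2.
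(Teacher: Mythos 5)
Your proposal is correct and follows essentially the same route as the paper: expand $XX^T$ blockwise, use $CC^T=I_n$ to remove $C$, and invoke the commutativity of $\lambda$-circulant matrices (with $\lambda^{-1}=\lambda$) together with characteristic 2 to kill the off-diagonal blocks and identify the two diagonal blocks as $AA^T+BB^T$. Your explicit justification that $A^T$ and $B^T$ are themselves $\lambda$-circulant is exactly the point the paper relies on implicitly, so nothing is missing.
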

		\begin{proof}
			 Since $\lambda^2=1$ by assumption, we have that $A$ and $B$ as well as their transpositions all commute with one another multiplicatively. We have
				\begin{equation*}
				XX^T=
				\begin{pmatrix}
				    AC & B\\
				    B^TC & A^T
				\end{pmatrix}
				\begin{pmatrix}
				    C^TA^T & C^TB\\
				    B^T & A
				\end{pmatrix}=
				\begin{pmatrix}
				    x_{1,1} & x_{1,2}\\
				    x_{1,2}^T & x_{2,2}
				\end{pmatrix},
				\end{equation*}	
			where
				\begin{align*}
				x_{1,1}&=ACC^TA^T+BB^T,\\
				x_{1,2}&=ACC^TB+BA,\\
				x_{2,2}&=B^TCC^TB+A^TA
				\end{align*}
			and since $CC^T=I_n$ by assumption, we get
				\begin{align*}
				x_{1,1}&=AA^T+BB^T,\\
				x_{1,2}&=AB+BA=2AB=\vct{0},\\
				x_{2,2}&=B^TB+A^TA=AA^T+BB^T.
				\end{align*}

			Therefore, $XX^T=I_{2n}$ if and only if $AA^T+BB^T=I_n$.
		\end{proof}
	\begin{theorem}\label{theorem-1}
	Let $n\in\ZZ^+$ such that $n$ is odd and let
		\begin{align*}
			G=
	     	\begin{pmatrix}[c|c|cc]
				\vct{v} & \vct{0} & \xi_3 & \xi_4\\\midrule
				I_{2n} & X & \vct{v}^T & \vct{v}^T
			\end{pmatrix}
			,\quad\text{where }
			X=\begin{pmatrix}
			    AC & B\\
			    B^TC & A^T
			\end{pmatrix}
		\end{align*}
	where $\vct{v}=(\vctg{\xi}_1,\vctg{\xi}_2)$ with $\vctg{\xi}_i=(\xi_i,\xi_i,\ldots,\xi_i)\in R^n$ for $i\in\{1,2\}$ and $\xi_3,\xi_4\in R$ also with $A=\pcir{\lambda}{\vct{a}}$, $B=\pcir{\lambda}{\vct{b}}$ and $C=\pcir{\mu}{\vct{c}}$ for $\vct{a},\vct{b},\vct{c}\in R^n$ and $\lambda,\mu\in R:\lambda^2=\mu^2=1$. If $CC^T=I_n$, then $G$ is a generator matrix of a self-dual code of length $2(2n+1)$ if and only if
		\begin{empheq}[left=\empheqlbrace]{align*}
			AA^T+BB^T&=I_n,\\
			\sum_{i=1}^4\xi_i^2&=0,\\
			\xi_j(\xi_3+\xi_4+1)&=0\quad j\in\{1,2\},
		\end{empheq}
	and the free rank of $(\vctg{\xi}_1A+\vctg{\xi}_2B, \vctg{\xi}_1B+\vctg{\xi}_2A,\xi_3,\xi_4)$ is 1.
	\end{theorem}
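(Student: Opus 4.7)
The plan is to split the theorem into the two standard conditions that characterise self-dual codes over a finite commutative Frobenius ring: the generator matrix $G$ must satisfy (i) $GG^T = 0$, so the code $\mathcal{C}$ it spans is self-orthogonal, and (ii) the linear map $x \mapsto xG$ must be injective, forcing $|\mathcal{C}| = |R|^{2n+1}$ and hence $\mathcal{C} = \mathcal{C}^{\perp}$. Decomposing $G$ into its top row $r = (\vct{v}, \vct{0}, \xi_3, \xi_4)$ and the bottom block $M = (I_{2n} \mid X \mid \vct{v}^T \mid \vct{v}^T)$ lets me handle both parts block by block.

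For (i) I would compute $GG^T$ as a $2 \times 2$ block product. The scalar $rr^T = \vct{v}\vct{v}^T + \xi_3^2 + \xi_4^2$ simplifies using characteristic 2 and $n$ odd, so that $\vct{v}\vct{v}^T = n\xi_1^2 + n\xi_2^2 = \xi_1^2 + \xi_2^2$ and $rr^T = \sum_{i=1}^{4}\xi_i^2$; vanishing gives condition 2. The row vector $rM^T$ has $j\nth$ entry $v_j(1+\xi_3+\xi_4)$ because the $X$-block of $r$ is zero, so $rM^T = \vct{0}$ is exactly condition 3. Finally $MM^T = I_{2n} + XX^T + 2\vct{v}^T\vct{v} = I_{2n} + XX^T$ in characteristic 2, and so $MM^T = \vct{0}$ is $XX^T = I_{2n}$, which by Lemma \ref{lemma-1} (using $CC^T = I_n$) is exactly $AA^T + BB^T = I_n$, i.e., condition 1.

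For (ii), the bottom $2n$ rows of $G$ are already free because they contain the identity block $I_{2n}$. If $xG = \vct{0}$, then matching the first $2n$ coordinates forces $(x_1,\ldots,x_{2n}) = x_0\vct{v}$, and substituting reduces the remaining coordinates to the three scalar conditions $x_0\vct{v}X = \vct{0}$, $x_0(\xi_3 + \vct{v}\vct{v}^T) = 0$, and $x_0(\xi_4 + \vct{v}\vct{v}^T) = 0$. Under condition 1 the matrix $X$ is invertible (from $XX^T = I_{2n}$), so the first equation is equivalent to $x_0\vct{v} = \vct{0}$, i.e., $x_0\xi_1 = x_0\xi_2 = 0$; this forces $x_0\vct{v}\vct{v}^T = x_0(\xi_1^2+\xi_2^2) = 0$, and the remaining equations collapse to $x_0\xi_3 = x_0\xi_4 = 0$. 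So injectivity of $x \mapsto xG$ is equivalent to the annihilator of $(\xi_1,\xi_2,\xi_3,\xi_4)$ in $R$ being trivial.

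The main obstacle --- matching this to the free-rank condition as stated --- is the last step. The key observation is that condition 3, in characteristic 2, can be rewritten as $\xi_j = \xi_j(\xi_3+\xi_4)$ for $j \in \{1,2\}$, so any $\alpha$ with $\alpha\xi_3 = \alpha\xi_4 = 0$ automatically satisfies $\alpha\xi_j = \alpha\xi_j(\xi_3+\xi_4) = \xi_j(\alpha\xi_3 + \alpha\xi_4) = 0$, giving the inclusion $\text{Ann}(\xi_3,\xi_4) \subseteq \text{Ann}(\xi_1,\xi_2)$. Conversely, $\alpha\xi_1 = \alpha\xi_2 = 0$ trivially kills $\vctg{\xi}_1A + \vctg{\xi}_2B$ and $\vctg{\xi}_1B + \vctg{\xi}_2A$. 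Chaining these implications shows that both $\text{Ann}((\xi_1,\xi_2,\xi_3,\xi_4))$ and $\text{Ann}((\vctg{\xi}_1A+\vctg{\xi}_2B,\,\vctg{\xi}_1B+\vctg{\xi}_2A,\,\xi_3,\xi_4))$ collapse to the common submodule $\text{Ann}(\xi_3,\xi_4)$, so the two free-rank-$1$ conditions are the same statement. With this identification, combined with the block computation of $GG^T$ and Lemma \ref{lemma-1}, the proof is complete.
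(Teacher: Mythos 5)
Your proposal is correct, and while your self-orthogonality half follows the paper essentially verbatim (block computation of $GG^T$, the characteristic-2 and $n$-odd simplification $\vct{v}\vct{v}^T=\xi_1^2+\xi_2^2$, and Lemma \ref{lemma-1} to convert $XX^T=I_{2n}$ into $AA^T+BB^T=I_n$), your treatment of the rank condition takes a genuinely different route. The paper postmultiplies $G$ by an invertible block matrix to reduce the free rank of $G$ to that of the single row $(\vct{v}X,\vct{v}\vct{v}^T+\xi_3,\vct{v}\vct{v}^T+\xi_4)$, and then works this into the stated tuple by elementary column operations, using invertibility of $C$, the reversal identity $\vctg{\xi}_2AJ_n=\vctg{\xi}_2A^T$, and the relation $\xi_1^2+\xi_2^2=\xi_3^2+\xi_4^2$ coming from condition 2. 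You instead compute the kernel of $x\mapsto xG$ directly: the identity block forces $(x_1,\ldots,x_{2n})=x_0\vct{v}$, invertibility of $X$ (which follows from $XX^T=I_{2n}$ over a commutative ring) reduces the whole system to $x_0$ annihilating $(\xi_1,\xi_2,\xi_3,\xi_4)$, and condition 3 in the form $\xi_j=\xi_j(\xi_3+\xi_4)$ shows that this annihilator and that of the theorem's tuple both collapse to $\mathrm{Ann}(\xi_3,\xi_4)$. Both arguments rest on the same identification of self-duality with self-orthogonality plus free rank $2n+1$ (equivalently, injectivity of $x\mapsto xG$, using the Frobenius size count), and both legitimately assume the self-orthogonality conditions when proving the rank equivalence. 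Your route is more elementary --- it avoids the $J_n$ and $C$ manipulations entirely --- and has the added payoff of exposing that, once self-orthogonality holds, the free-rank condition is equivalent to the simpler requirement that no nonzero element of $R$ annihilates both $\xi_3$ and $\xi_4$, so the $A$, $B$, $\xi_1$, $\xi_2$ entries of the tuple are redundant in the criterion; the paper's route has the virtue of producing the stated tuple organically rather than reverse-engineering its equivalence.
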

		\begin{proof}
			First, let us determine the conditions required for $G$ to be a generator matrix of a self-orthogonal code. We have
				\begin{align*}
					GG^T&=
					\begin{pmatrix}[c|c|cc]
					\vct{v} & \vct{0} & \xi_3 & \xi_4\\\midrule
					I_{2n} & X & \vct{v}^T & \vct{v}^T
					\end{pmatrix}
					\begin{pmatrix}[c|c]
					\vct{v}^T & I_{2n}\\\midrule
					\vct{0} & X^T\\\midrule
					\xi_3 & \vct{v}\\
					\xi_4 & \vct{v}
					\end{pmatrix}=
					\begin{pmatrix}
						g_{1,1} & g_{1,2}\\
						g_{1,2}^T & g_{2,2}
					\end{pmatrix},
				\end{align*}
			where
				\begin{align*}
					g_{1,1}&=\vct{v}\vct{v}^T+\xi_3^2+\xi_4^2,\\
					g_{1,2}&=(\xi_3+\xi_4+1)\vct{v},\\
					g_{2,2}&=XX^T+I_{2n}+2\vct{v}^T\vct{v}
				\end{align*}
			so that $GG^T=\vct{0}$ if and only if $g_{1,1}=0$, $g_{1,2}=\vct{0}$ and $g_{2,2}=\vct{0}$. Since $R$ is of characteristic 2 and $n$ is odd, we have
				\begin{align*}
					g_{1,1}&=\vct{v}\vct{v}^T+\xi_3^2+\xi_4^2\\&=
					n(\xi_1^2+\xi_2^2)+\xi_3^2+\xi_4^2\\&=
					\xi_1^2+\xi_2^2+\xi_3^2+\xi_4^2,
				\end{align*}	
			so $g_{1,1}=0$ if and only if $\sum_{i=1}^4\xi_i^2=0$. We also have $2\vct{v}^T\vct{v}=\vct{0}$,
			so $g_{2,2}=\vct{0}$ if and only if $XX^T=I_{2n}$. Since $CC^T=I_n$ by assumption, it follows from Lemma \ref{lemma-1} that $XX^T=I_{2n}$ if and only if $AA^T+BB^T=I_n$. Finally , we see that $g_{1,2}=\vct{0}$ if and only if $\xi_j(\xi_3+\xi_4+1)=0$ for $j\in\{1,2\}$.
			
			Assume now that $G$ is a generator matrix of a self-orthogonal code. We need to prove that the free rank of $G$ is $2n+1$ if and only if the free rank of $(\vctg{\xi}_1A+\vctg{\xi}_2B,\vctg{\xi}_1B+\vctg{\xi}_2A,\xi_3,\xi_4)$ is 1. The free rank of $G$ is unchanged by elementary row (or column) operations and premultiplication (or postmultiplication) by an invertible matrix of appropriate size. Let $\tilde{G}=GM$ where
				\begin{align*}
					M=\begin{pmatrix}[c|c|cc]
						I_{2n} & X & \begin{matrix}\vct{v}^T & \vct{v}^T\end{matrix}\\\midrule
						\vct{0} & I_{2n} & \vct{0}\\\midrule
						\vct{0} & \vct{0} & I_2
					\end{pmatrix}.
				\end{align*}
			
			Let $\rank{}$ denote the free rank of a matrix over $R$. It is clear that $M$ is invertible and hence $\rank{\tilde{G}}=\rank{G}$. We have that
				\begin{align*}
					GM&=
					\begin{pmatrix}[c|c|cc]
						\vct{v} & \vct{0} & \xi_3 & \xi_4\\\midrule
						I_{2n} & X & \vct{v}^T & \vct{v}^T
					\end{pmatrix}
					\begin{pmatrix}[c|c|cc]
						I_{2n} & X & \begin{matrix} \vct{v}^T & \vct{v}^T\end{matrix}\\\midrule
						\vct{0} & I_{2n} & \vct{0}\\\midrule
						\vct{0} & \vct{0} & I_2
					\end{pmatrix}\\&=
					\begin{pmatrix}
						\vct{v} & \vct{v}X & \vct{v}(\vct{v}^T,\vct{v}^T)+(\xi_3,\xi_4)\\
						I_{2n} & \vct{0} & \vct{0}
					\end{pmatrix}\\&=
					\begin{pmatrix}
						\vct{v} & \vct{v}X & (\vct{v}\vct{v}^T+\xi_3,\vct{v}\vct{v}^T+\xi_4)\\
						I_{2n} & \vct{0} & \vct{0}
					\end{pmatrix}.
				\end{align*}
			
			Let $r=\rank{(\vct{v}X,\vct{v}\vct{v}^T+\xi_3,\vct{v}\vct{v}^T+\xi_4)}$. Then $\rank{\tilde{G}}=2n+1$ if and only if $r=1$. We see that
				\begin{align*}
					\vct{v}X&=
					(\vctg{\xi}_1,\vctg{\xi}_2)
					\begin{pmatrix}
						AC & B\\
						B^TC & A^T
					\end{pmatrix}\\&=
					((\vctg{\xi}_1A+\vctg{\xi}_2B^T)C,\vctg{\xi}_1B+\vctg{\xi}_2A^T)
				\end{align*}
			and
				\begin{align*}
					(\vct{v}\vct{v}^T+\xi_3,\vct{v}\vct{v}^T+\xi_4)=
					(\xi_3+\xi_1^2+\xi_2^2,\xi_4+\xi_1^2+\xi_2^2).
				\end{align*}	
			
			Since $G$ is a generator matrix of a self-orthogonal code, we have $\sum_{i=1}^4\xi_i^2=0$ so that $\xi_1^2+\xi_2^2=\xi_3^2+\xi_4^2$. By elementary column operations we obtain
				\begin{align*}
					r&=\rank{(\vct{v}X,\vct{v}\vct{v}^T+\xi_3,\vct{v}\vct{v}^T+\xi_4)}\\&=
					\rank{(\vct{v}X,\xi_3+\xi_1^2+\xi_2^2,\xi_4+\xi_1^2+\xi_2^2)}\\&=
					\rank{(\vct{v}X,\xi_3+\xi_3^2+\xi_4^2,\xi_4+\xi_3^2+\xi_4^2)}\\&=
					\rank{(\vct{v}X,\xi_3+\xi_3^2+\xi_4^2,\xi_3+\xi_4)}\\&=
					\rank{(\vct{v}X,\xi_3+\xi_3^2+\xi_4^2+(\xi_3+\xi_4)^2,\xi_3+\xi_4)}\\&=
					\rank{(\vct{v}X,\xi_3,\xi_3+\xi_4)}\\&=
					\rank{(\vct{v}X,\xi_3,\xi_4)}.
				\end{align*}
			
			We also have that $CC^T=I_n$ so that $C$ is invertible. We also recall that $\vctg{\xi}_2A=\vctg{\xi}_2A^TJ_n$ so that $\vctg{\xi}_2AJ_n=\vctg{\xi}_2A^T$ and likewise $\vctg{\xi}_2BJ_n=\vctg{\xi}_2B^T$. Thus, we get
				\begin{align*}
					r&=\rank{(\vct{v}X,\xi_3,\xi_4)}\\&=
					\rank{((\vctg{\xi}_1A+\vctg{\xi}_2B^T)C,\vctg{\xi}_1B+\vctg{\xi}_2A^T,\xi_3,\xi_4)}\\&=
					\rank{(\vctg{\xi}_1A+\vctg{\xi}_2BJ_n,\vctg{\xi}_1B+\vctg{\xi}_2AJ_n,\xi_3,\xi_4)}\\&=
					\rank{(\vctg{\xi}_1A+\vctg{\xi}_2B,\vctg{\xi}_1B+\vctg{\xi}_2A,\xi_3,\xi_4)}
				\end{align*}
			and so $\rank{G}=2n+1$ if and only if $\rank{(\vctg{\xi}_1A+\vctg{\xi}_2B, \vctg{\xi}_1B+\vctg{\xi}_2A,\xi_3,\xi_4)}=1$.
		\end{proof}

\section{Results}

In this section, we apply Theorem \ref{theorem-1} to obtain many new best known, optimal and extremal binary self-dual codes. In particular, we obtain 8 singly-even $[56,28,10]$ codes, one $[78,39,14]$ code, 3 $[92,46,16]$ codes and 43 $[94,47,16]$ codes.

We also apply the following well-known technique for constructing self-dual codes referred to as the building-up construction.
	\begin{theorem}\label{theorem-2}\textup{(\cite{R-068})}
	Let $R$ be a commutative Frobenius ring. Let $G'$ be a generator matrix of a self-dual code $\mathcal{C}'$ of length $2n$ over $R$ and let $\vct{r}_i$ denote the $i\nth$ row of $G'$. Let $\varepsilon\in R:\varepsilon^2=-1$, $\vctg{\updelta}\in R^{2n}:\langle\vctg{\updelta},\vctg{\updelta}\rangle=-1$ and $\gamma_i=\langle\vct{r}_i,\vctg{\updelta}\rangle$ for $i\in\{1,\ldots,n\}$. Then the matrix
		\begin{equation*}
		G=\begin{pmatrix}[cc|c]
		1 & 0 & \vctg{\updelta}\\\hline
		-\gamma_1 & \varepsilon\gamma_1 & \vct{r}_1\\
		-\gamma_2 & \varepsilon\gamma_2 & \vct{r}_2\\
		\vdots & \vdots & \vdots\\
		-\gamma_n & \varepsilon\gamma_n & \vct{r}_n
		\end{pmatrix}
		\end{equation*}
	is a generator matrix of a self-dual code of length $2(n+1)$ over $R$.
	\end{theorem}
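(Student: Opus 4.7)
The plan is to verify the two standard conditions that together force the code $\mathcal{C}$ generated by $G$ to be self-dual of length $2(n+1)$: first, that $GG^T=\vct{0}$ (so $\mathcal{C}\subseteq\mathcal{C}^\perp$), and second, that $|\mathcal{C}|=|R|^{n+1}$. Since $R$ is a Frobenius ring, every linear code $\mathcal{D}$ of length $N$ over $R$ satisfies $|\mathcal{D}|\cdot|\mathcal{D}^\perp|=|R|^N$, so the combination of self-orthogonality with $|\mathcal{C}|=|R|^{(2n+2)/2}$ upgrades $\mathcal{C}\subseteq\mathcal{C}^\perp$ to $\mathcal{C}=\mathcal{C}^\perp$.

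For self-orthogonality, label the rows of $G$ as $\vct{R}_0=(1,0,\vctg{\updelta})$ and $\vct{R}_i=(-\gamma_i,\varepsilon\gamma_i,\vct{r}_i)$ for $1\leq i\leq n$, and compute the three types of pairwise inner products. One obtains $\langle\vct{R}_0,\vct{R}_0\rangle=1+0+\langle\vctg{\updelta},\vctg{\updelta}\rangle=1+(-1)=0$ directly from the hypothesis on $\vctg{\updelta}$; $\langle\vct{R}_0,\vct{R}_i\rangle=-\gamma_i+0+\langle\vctg{\updelta},\vct{r}_i\rangle=-\gamma_i+\gamma_i=0$ from the defining equation of $\gamma_i$; and $\langle\vct{R}_i,\vct{R}_j\rangle=\gamma_i\gamma_j+\varepsilon^2\gamma_i\gamma_j+\langle\vct{r}_i,\vct{r}_j\rangle=(1+\varepsilon^2)\gamma_i\gamma_j=0$, using $\varepsilon^2=-1$ together with the self-orthogonality of $\mathcal{C}'$ to kill the last term. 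This shows $GG^T=\vct{0}$.

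For the cardinality, I first note that since $\mathcal{C}'$ is self-dual of length $2n$ over the Frobenius ring $R$, we have $|\mathcal{C}'|=|R|^n$. As $\mathcal{C}'$ is the image of the map $R^n\to R^{2n}$, $\vct{c}\mapsto\vct{c}G'$, and the domain $R^n$ also has cardinality $|R|^n$, this map is a bijection, so the rows of $G'$ admit only the trivial $R$-linear relation. Now suppose $(c_0,c_1,\ldots,c_n)G=\vct{0}$. Reading off the second coordinate gives $\varepsilon\sum_i c_i\gamma_i=0$, and since $\varepsilon$ is a unit (from $\varepsilon\cdot(-\varepsilon)=1$), this yields $\sum_i c_i\gamma_i=0$, whence the first coordinate forces $c_0=0$; then the remaining $2n$ coordinates give $\sum_i c_i\vct{r}_i=\vct{0}$, and the established triviality of relations among the rows of $G'$ forces each $c_i=0$. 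Hence the map $R^{n+1}\to R^{2n+2}$, $\vct{c}\mapsto\vct{c}G$, is injective, so $|\mathcal{C}|=|R|^{n+1}$ as required.

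The self-orthogonality computation is entirely mechanical once the three hypotheses $\langle\vctg{\updelta},\vctg{\updelta}\rangle=-1$, $\varepsilon^2=-1$, and ${G'}{G'}^T=\vct{0}$ are in hand; the only mildly delicate step is the cardinality count, which critically invokes the Frobenius hypothesis through $|\mathcal{C}'|=|R|^n$ to reduce the injectivity of $G$ to that of $G'$ via the extra border columns, and simultaneously through $|\mathcal{C}|\cdot|\mathcal{C}^\perp|=|R|^{2n+2}$ to conclude $\mathcal{C}=\mathcal{C}^\perp$ from $\mathcal{C}\subseteq\mathcal{C}^\perp$ and $|\mathcal{C}|=|R|^{n+1}$.
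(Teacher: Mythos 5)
The paper states this theorem without proof, citing \cite{R-068}, so there is no in-paper argument to compare against; your proposal is a correct, self-contained proof and is essentially the standard argument for the building-up construction. The self-orthogonality computation checks out in all three cases, and the cardinality count correctly invokes the Frobenius hypothesis twice: once to get $|\mathcal{C}'|=|R|^n$ and hence the injectivity of $\vct{c}\mapsto\vct{c}G'$ (which, via the border columns, transfers to injectivity of $\vct{c}\mapsto\vct{c}G$), and once to upgrade $\mathcal{C}\subseteq\mathcal{C}^\perp$ together with $|\mathcal{C}|=|R|^{n+1}$ to $\mathcal{C}=\mathcal{C}^\perp$.
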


By applying Theorem \ref{theorem-2}, we obtain 3 singly-even $[56,28,10]$ codes.

	\begin{remark}\label{remark-1}
	Two binary self-dual codes of length $2n$ are said to be neighbours if their intersection has dimension $n-1$. Let $\mathcal{C}^*$ be a binary self-dual code of length $2n$ and let $\vct{x}\in\FF_2^{2n}\setminus \mathcal{C}^*$. Then $\mathcal{C}=\langle\langle\vct{x}\rangle^{\perp}\cap\mathcal{C}^*,\vct{x}\rangle$ is a neighbour of $\mathcal{C}^*$, where $\langle\vct{x}\rangle$ denotes the code generated by $\vct{x}$.
	\end{remark}
	
Using Remark \ref{remark-1}, we obtain one new $[62,31,12]$ code as a neighbour of a $[62,31,12]$ code constructed by applying Theorem \ref{theorem-1}.

We conduct the search for these codes using MATLAB and Magma \cite{Magma} and determine their properties using Q-extension \cite{Q-extension} and Magma. In MATLAB, we employ an algorithm which randomly searches for the construction parameters that satisfy the necessary and sufficient conditions stated in Theorem \ref{theorem-1}. For such parameters, we then build the corresponding binary generator matrices and print them to text files. We then use Q-extension to read these text files and determine the minimum distance and partial weight enumerator of each corresponding code. Furthermore, we determine the automorphism
group order of each code using Magma. We implement a similar procedure for Theorem \ref{theorem-2}. Magma is used to search for neighbours as described in Remark \ref{remark-1}. A database of generator matrices of the new codes is given online at \cite{GMD}. The database is partitioned into text files (interpretable by Q-extension) corresponding to each code type. In these files, specific properties of the codes including the construction parameters, weight enumerator parameter values and automorphism group order are formatted as comments above the generator matrices. Partial weight enumerators of the codes are also formatted as comments below the generator matrices. Table \ref{table-1} gives the hexadecimal notation system we use to represent elements of $\FF_2+u\FF_2$, $\FF_2+u\FF_2+v\FF_2+uv\FF_2$ and $\FF_4+u\FF_4$. 

	\begin{table}
	\caption{Hexadecimal notation system for elements of $\FF_2+u\FF_2$, $\FF_2+u\FF_2+v\FF_2+uv\FF_2$ and $\FF_4+u\FF_4$.}\label{table-1}
	\centering
	\begin{adjustbox}{max width=\textwidth}
	\footnotesize
	\setlength{\tabcolsep}{4pt}
	\begin{tabular}{cccc}\midrule
	$\FF_2+u\FF_2$ & $\FF_2+u\FF_2+v\FF_2+uv\FF_2$ & $\FF_4+u\FF_4$ & Symbol\\\midrule
	$0$   & $0$        & $0$        & \texttt{0}\\
	$1$   & $1$        & $1$        & \texttt{1}\\
	$u$   & $u$        & $w$        & \texttt{2}\\
	$1+u$ & $1+u$      & $1+w$      & \texttt{3}\\
	$-$   & $v$        & $u$        & \texttt{4}\\
	$-$   & $1+v$      & $1+u$      & \texttt{5}\\
	$-$   & $u+v$      & $w+u$      & \texttt{6}\\
	$-$   & $1+u+v$    & $1+w+u$    & \texttt{7}\\
	$-$   & $uv$       & $wu$       & \texttt{8}\\
	$-$   & $1+uv$     & $1+wu$     & \texttt{9}\\
	$-$   & $u+uv$     & $w+wu$     & \texttt{A}\\
	$-$   & $1+u+uv$   & $1+w+wu$   & \texttt{B}\\
	$-$   & $v+uv$     & $u+wu$     & \texttt{C}\\
	$-$   & $1+v+uv$   & $1+u+wu$   & \texttt{D}\\
	$-$   & $u+v+uv$   & $w+u+wu$   & \texttt{E}\\
	$-$   & $1+u+v+uv$ & $1+w+u+wu$ & \texttt{F}\\\midrule
	\end{tabular}
	\end{adjustbox}
	\end{table}
	
\subsection{New Self-Dual Codes of Length 56}
	
The possible weight enumerators of a singly-even binary self-dual $[56,28,10]$ code are given in \cite{R-030} as
	\begin{align*}
	    W_{56,1}&=1+(308+4\alpha)x^{10}+(4246-8\alpha)x^{12}+\cdots,\\
	    W_{56,2}&=1+(308+4\alpha)x^{10}+(3990-8\alpha)x^{12}+\cdots,
	\end{align*}
where $\alpha\in\ZZ$. Previously known $\alpha$ values for weight enumerators $W_{56,1}$ and $W_{56,2}$ can be found online at \cite{WEPD} (see \cite{R-030,R-080,R-098,MRes,AMR1}).

We obtain 11 new best known singly-even binary self-dual codes of length 56 of which 6 have weight enumerator $W_{56,1}$ for
	\begin{enumerate}[label=]
        \item $\alpha\in\{-z:z=45,\lb47,\lb49,\lb50,\lb54,\lb55\}$
	\end{enumerate}	
and 5 have weight enumerator $W_{56,2}$ for
	\begin{enumerate}[label=]
	    \item $\alpha\in\{-z:z=50,\lb...,\lb53,\lb55\}$.
	\end{enumerate}	

Of the 11 new codes, 3 are constructed by first applying Theorem \ref{theorem-1} to obtain codes of length 54 over $\FF_2$ (Table \ref{table-54}) to which we then apply Theorem \ref{theorem-2} (Table \ref{table-56-F2}); 6 are constructed by applying Theorem \ref{theorem-1} over $\FF_2+u\FF_2+v\FF_2+uv\FF_2$ (Table \ref{table-56-F2uvuv}) and 2 are constructed by applying Theorem \ref{theorem-1} over $\FF_4+u\FF_4$ (Table \ref{table-56-F4u}).

	\begin{table}
	\caption{Code of length 54 over $\FF_2$ from Theorem \ref{theorem-1} to which we apply Theorem \ref{theorem-2} to obtain the codes in Table \ref{table-56-F2}, where $\vctg{\xi}=(\xi_1,\xi_2,\xi_3,\xi_4)$.}\label{table-54}
	\centering
	\begin{adjustbox}{max width=\textwidth}
	\footnotesize
	\setlength{\tabcolsep}{4pt}
	\begin{tabular}{ccccc}\midrule
	$\mathcal{C}_{54,i}'$ & $\vct{a}$ & $\vct{b}$ & $\vct{c}$ & $\vctg{\xi}$\\\midrule
	1 & \texttt{(0100111100101)} & \texttt{(1111101111010)} & \texttt{(1011001111110)} & \texttt{(1010)}
	\\\midrule
	\end{tabular}
	\end{adjustbox}
	\end{table}

	\begin{table}
	\caption{New singly-even binary self-dual $[56,28,10]$ codes from applying Theorem \ref{theorem-2} to $\mathcal{C}_{54,j}'$ as given in Table \ref{table-54} with $\vctg{\delta}=(\vct{0},\vctg{\delta}_0)$.}\label{table-56-F2}
	\centering
	\begin{adjustbox}{max width=\textwidth}
	\footnotesize
	\setlength{\tabcolsep}{4pt}
	\begin{tabular}{cccccc}\midrule
	$\mathcal{C}_{56,i}$ & $\mathcal{C}_{54,j}'$ & $\vctg{\updelta}_0$ & $W_{56,k}$ & $\alpha$ & $|\aut{\mathcal{C}_{56,i}}|$\\\midrule
	1 & 1 & \texttt{(000101100101100011111000101)} & 1 & $-55$ & $1$\\
	2 & 1 & \texttt{(001000100111010111011101110)} & 1 & $-47$ & $1$\\
	3 & 1 & \texttt{(110101100010101111001101100)} & 2 & $-50$ & $1$\\\midrule
	\end{tabular}
	\end{adjustbox}
	\end{table}
	
	\begin{table}
	\caption{New singly-even binary self-dual $[56,28,10]$ codes from Theorem \ref{theorem-1} over $\FF_2+u\FF_2+v\FF_2+uv\FF_2$, where $\vctg{\xi}=(\xi_1,\xi_2,\xi_3,\xi_4)$.}\label{table-56-F2uvuv}
	\centering
	\begin{adjustbox}{max width=\textwidth}
	\footnotesize
	\setlength{\tabcolsep}{4pt}
	\begin{tabular}{cccccccccc}\midrule
	$\mathcal{C}_{56,i}$ & $\lambda$ & $\mu$ & $\vct{a}$ & $\vct{b}$ & $\vct{c}$ & $\vctg{\xi}$ & $W_{56,j}$ & $\alpha$ & $|\aut{\mathcal{C}_{56,i}}|$\\\midrule
	4 & \texttt{1} & \texttt{9} & \texttt{(B03)} & \texttt{(39D)} & \texttt{(344)} & \texttt{(7EBA)} & 1 & $-54$ & $2^{2}\cdot 3$\\
	5 & \texttt{F} & \texttt{1} & \texttt{(331)} & \texttt{(8F9)} & \texttt{(EE3)} & \texttt{(DEBA)} & 1 & $-50$ & $2^{2}$\\
	6 & \texttt{1} & \texttt{3} & \texttt{(D52)} & \texttt{(F95)} & \texttt{(700)} & \texttt{(9EAB)} & 2 & $-55$ & $2^{2}\cdot 3$\\
	7 & \texttt{1} & \texttt{5} & \texttt{(DB9)} & \texttt{(D45)} & \texttt{(88D)} & \texttt{(D654)} & 2 & $-53$ & $2^{3}\cdot 3$\\
	8 & \texttt{1} & \texttt{1} & \texttt{(31F)} & \texttt{(54D)} & \texttt{(00F)} & \texttt{(D6DC)} & 2 & $-52$ & $2^{3}\cdot 3$\\
	9 & \texttt{1} & \texttt{F} & \texttt{(FA7)} & \texttt{(FB5)} & \texttt{(C47)} & \texttt{(76AB)} & 2 & $-51$ & $2^{2}$\\\midrule
	\end{tabular}
	\end{adjustbox}
	\end{table}

	\begin{table}
	\caption{New singly-even binary self-dual $[56,28,10]$ codes from Theorem \ref{theorem-1} over $\FF_4+u\FF_4$, where $\vctg{\xi}=(\xi_1,\xi_2,\xi_3,\xi_4)$.}\label{table-56-F4u}
	\centering
	\begin{adjustbox}{max width=\textwidth}
	\footnotesize
	\setlength{\tabcolsep}{4pt}
	\begin{tabular}{cccccccccc}\midrule
	$\mathcal{C}_{56,i}$ & $\lambda$ & $\mu$ & $\vct{a}$ & $\vct{b}$ & $\vct{c}$ & $\vctg{\xi}$ & $W_{56,j}$ & $\alpha$ & $|\aut{\mathcal{C}_{56,i}}|$\\\midrule
	10 & \texttt{1} & \texttt{1} & \texttt{(48D)} & \texttt{(5F2)} & \texttt{(CC9)} & \texttt{(6F67)} & 1 & $-49$ & $2\cdot 3$\\
	11 & \texttt{D} & \texttt{9} & \texttt{(B5D)} & \texttt{(D61)} & \texttt{(900)} & \texttt{(6F67)} & 1 & $-45$ & $2$\\\midrule
	\end{tabular}
	\end{adjustbox}
	\end{table}

\subsection{New Self-Dual Code of Length 62}

The possible weight enumerators of a binary self-dual $[62,31,12]$ code are given in \cite{R-191} as
	\begin{align*}
	    W_{62,1}&=1+2308x^{12}+23767x^{14}+\cdots,\\
	    W_{62,2}&=1+(1860+32\alpha)x^{12}+(28055-160\alpha)x^{14}+\cdots,
	\end{align*}
where $\alpha\in\ZZ$. Previously known $\alpha$ values for weight enumerator $W_{62,2}$ can be found online at \cite{WEPD} (see \cite{R-192,R-191,R-031,R-193,R-125}).

We obtain one new extremal binary self-dual codes of length 62 which has weight enumerator $W_{62,2}$ for
	\begin{enumerate}[label=]
	    \item $\alpha=2$.
	\end{enumerate}	

The new code is constructed by first applying Theorem \ref{theorem-1} to obtain a code of length 62 over $\FF_2$ (Table \ref{table-62-1}) and then searching for neighbours of this code using Remark \ref{remark-1} (Table \ref{table-62-2}).

	\begin{table}
	\caption{Code of length 62 over $\FF_2$ from Theorem \ref{theorem-1} to which we apply Remark \ref{remark-1} to obtain the code in Table \ref{table-62-2}, where $\vctg{\xi}=(\xi_1,\xi_2,\xi_3,\xi_4)$.}\label{table-62-1}
	\centering
	\begin{adjustbox}{max width=\textwidth}
	\footnotesize
	\setlength{\tabcolsep}{4pt}
	\begin{tabular}{ccccc}\midrule
	$\mathcal{C}_{62,i}^*$ & $\vct{a}$ & $\vct{b}$ & $\vct{c}$ & $\vctg{\xi}$\\\midrule
	1 & \texttt{(000000100100101)} & \texttt{(000011101110111)} & \texttt{(100000000000000)} & \texttt{(0110)}\\\midrule
	\end{tabular}
	\end{adjustbox}
	\end{table}

	\begin{table}
	\caption{New binary self-dual $[62,31,12]$ code from searching for neighbours of $\mathcal{C}_{62,j}^*$ as given in Table \ref{table-62-1} using Remark \ref{remark-1} with $\vct{x}=(\vct{0},\vct{x}_0)$.}\label{table-62-2}
	\centering
	\begin{adjustbox}{max width=\textwidth}
	\footnotesize
	\setlength{\tabcolsep}{4pt}
	\begin{tabular}{cccccc}\midrule
	$\mathcal{C}_{62,i}$ & $\mathcal{C}_{62,j}^*$ & $\vct{x}_0$ & $W_{62,k}$ & $\alpha$ & $|\aut{\mathcal{C}_{62,i}}|$\\\midrule
	1 & 1 & \texttt{(1111001101001110100110000100110)} & 2 & $2$ & $2^{2}$\\\midrule
	\end{tabular}
	\end{adjustbox}
	\end{table}

\subsection{New Self-Dual Code of Length 78}

The possible weight enumerators of a binary self-dual $[78,39,14]$ code are given in \cite{R-044,R-033} as
	\begin{align*}
	    W_{78,1}&=1+(3705+8\alpha)x^{14}+(62244-24\alpha+512\beta)x^{16}\\&\quad+(774592-64\alpha-4608\beta)x^{18}+\cdots,\\
	    W_{78,2}&=1+(3705+8\alpha)x^{14}+(71460-24\alpha)x^{16}\\&\quad+(658880-64\alpha)x^{18}+\cdots,
	\end{align*}
where $\alpha,\beta\in\ZZ$. Previously known $(\alpha,\beta)$ values for weight enumerator $W_{78,1}$ can be found online at \cite{WEPD} (see \cite{R-044,R-173,R-033,R-026,R-049,R-086,R-098}).

We obtain one new optimal binary self-dual codes of length 78 which has weight enumerator $W_{78,1}$ for
	\begin{enumerate}[label=]
	    \item $\beta=0$ and $\alpha=-76$.
	\end{enumerate}	

The new code is constructed by applying Theorem \ref{theorem-1} over $\FF_2$ (Table \ref{table-78}).

	\begin{table}
	\caption{New binary self-dual $[78,39,14]$ code from Theorem \ref{theorem-1} over $\FF_2$, where $\vctg{\xi}=(\xi_1,\xi_2,\xi_3,\xi_4)$.}\label{table-78}
	\centering
	\begin{adjustbox}{max width=\textwidth}
	\footnotesize
	\setlength{\tabcolsep}{4pt}
	\begin{tabular}{ccccc}\midrule
	$\mathcal{C}_{78,i}$ & $\vct{a}$ & $\vct{b}$ & $\vct{c}$ & $\vctg{\xi}$ \\\midrule
	1 & \texttt{(0100101010100001000)} & \texttt{(1111101101011010000)} & \texttt{(0010101111111101101)} & \texttt{(0101)}\\\midrule
	\end{tabular}
	\end{adjustbox}
	\end{table}

	\addtocounter{table}{-1}
	\begin{table}
	\caption{(continued)}
	\centering
	\begin{adjustbox}{max width=\textwidth}
	\footnotesize
	\setlength{\tabcolsep}{4pt}
	\begin{tabular}{ccccc}\midrule
	$\mathcal{C}_{78,i}$ & $W_{78,j}$ & $\alpha$ & $\beta$ & $|\aut{\mathcal{C}_{78,i}}|$\\\midrule
	1 & 1 & $-76$ & $0$ & $19$\\\midrule
	\end{tabular}
	\end{adjustbox}
	\end{table}

\subsection{New Self-Dual Codes of Length 92}

The possible weight enumerators of a binary self-dual $[92,46,16]$ code are given in \cite{R-044} as
	\begin{align*}
	    W_{92,1}&=1+(4692+4\alpha)x^{16}+(174800-8\alpha+256\beta)x^{18}\\&\quad+(2425488-52\alpha-2048\beta)x^{20}+\cdots,\\
	    W_{92,2}&=1+(4692+4\alpha)x^{16}+(174800-8\alpha+256\beta)x^{18}\\&\quad+(2441872-52\alpha-2048\beta)x^{20}+\cdots,\\
	    W_{92,3}&=1+(4692+4\alpha)x^{16}+(121296-8\alpha)x^{18}\\&\quad+(3213968-52\alpha)x^{20}+\cdots,
	\end{align*}
where $\alpha,\beta\in\ZZ$. Previously known $(\alpha,\beta)$ values for weight enumerator $W_{92,1}$ can be found online at \cite{WEPD} (see \cite{R-134,R-118,R-136,R-137,R-096,AMR1}).

We obtain 3 new extremal binary self-dual codes of length 92 which have weight enumerator $W_{92,1}$ for
	\begin{enumerate}[label=]
	    \item $\beta=0$ and $\alpha\in\{807,862,1038\}$.
	\end{enumerate}	

The new codes are constructed by applying Theorem \ref{theorem-1} over $\FF_2+u\FF_2$ (Table \ref{table-92}).

	\begin{table}
	\caption{New binary self-dual $[92,46,16]$ codes from Theorem \ref{theorem-1} over $\FF_2+u\FF_2$, where $\vctg{\xi}=(\xi_1,\xi_2,\xi_3,\xi_4)$.}\label{table-92}
	\centering
	\begin{adjustbox}{max width=\textwidth}
	\footnotesize
	\setlength{\tabcolsep}{4pt}
	\begin{tabular}{ccccccccccc}\midrule
	$\mathcal{C}_{92,i}$ & $\lambda$ & $\mu$ & $\vct{a}$ & $\vct{b}$ & $\vct{c}$ & $\vctg{\xi}$ & $W_{92,j}$ & $\alpha$ & $\beta$ & $|\aut{\mathcal{C}_{92,i}}|$\\\midrule
	1 & \texttt{1} & \texttt{1} & \texttt{(02223003031)} & \texttt{(02321323010)} & \texttt{(22232222222)} & \texttt{(2301)} & 1 & $807$ & $0$ & $2\cdot 11$\\
	2 & \texttt{1} & \texttt{1} & \texttt{(22101022322)} & \texttt{(33123310002)} & \texttt{(02101130201)} & \texttt{(1201)} & 1 & $862$ & $0$ & $2\cdot 11$\\
	3 & \texttt{1} & \texttt{1} & \texttt{(21200233200)} & \texttt{(31233320122)} & \texttt{(22232222222)} & \texttt{(2310)} & 1 & $1038$ & $0$ & $2^{2}\cdot 11$\\\midrule
	\end{tabular}
	\end{adjustbox}
	\end{table}

\subsection{New Self-Dual Codes of Length 94}

The possible weight enumerators of a binary self-dual $[94,47,16]$ code are given in \cite{R-184} as
	\begin{align*}
	    W_{94,1}&=1+2\alpha x^{16}+(134044-2\alpha+128\beta)x^{18}\\&\quad+(2010660-30\alpha-896\beta)x^{20}+\cdots,\\
	    W_{94,2}&=1+2\alpha x^{16}+(134044-2\alpha+128\beta)x^{18}\\&\quad+(2018852-30\alpha-896\beta)x^{20}+\cdots,\\
	    W_{94,3}&=1+2\alpha x^{16}+(134044-2\alpha+128\beta)x^{18}\\&\quad+(2190884-30\alpha-896\beta)x^{20}+\cdots,
	\end{align*}
where $\alpha,\beta\in\ZZ$. Previously known $(\alpha,\beta)$ values for weight enumerator $W_{94,1}$ can be found online at \cite{WEPD} (see \cite{R-184}).

We obtain 43 new best known binary self-dual codes of length 94 which have weight enumerator $W_{94,1}$ for
	\begin{enumerate}[label=]
	    \item $\beta=-69$ and $\alpha\in\{46z:z=78\}$;
	    \item $\beta=-46$ and $\alpha\in\{46z:z=69,\lb70,\lb76,\lb78,\lb79\}$;
	    \item $\beta=-23$ and $\alpha\in\{46z:z=62,\lb63,\lb65,\lb...,\lb79\}$;
	    \item $\beta=0$ and $\alpha\in\{46z:z=57,\lb...,\lb65,\lb67,\lb...,\lb75,\lb77,\lb79\}$.
	\end{enumerate}	

The new codes are constructed by applying Theorem \ref{theorem-1} over $\FF_2$ (Table \ref{table-94}).

	\begin{table}
	\caption{New binary self-dual $[94,47,16]$ codes from Theorem \ref{theorem-1} over $\FF_2$, where $\vctg{\xi}=(\xi_1,\xi_2,\xi_3,\xi_4)$.}\label{table-94}
	\centering
	\begin{adjustbox}{max width=\textwidth}
	\footnotesize
	\setlength{\tabcolsep}{4pt}
	\begin{tabular}{ccccc}\midrule
	$\mathcal{C}_{94,i}$ & $\vct{a}$ & $\vct{b}$ & $\vct{c}$ & $\vctg{\xi}$ \\\midrule
	1 & \texttt{(10001010001010111110101)} & \texttt{(01000001011100010110100)} & \texttt{(00111001110111000101110)} & \texttt{(1001)}\\
	2 & \texttt{(10011110100001000100100)} & \texttt{(00011011100101010111010)} & \texttt{(00111010111010011111000)} & \texttt{(0110)}\\
	3 & \texttt{(00101010100010100100010)} & \texttt{(11100001010001000001101)} & \texttt{(11111111000100100011010)} & \texttt{(1010)}\\
	4 & \texttt{(00000100011100110110010)} & \texttt{(00011111001000001101111)} & \texttt{(00010101100111100100000)} & \texttt{(0110)}\\
	5 & \texttt{(10011000101101011101110)} & \texttt{(10110011100000101011110)} & \texttt{(11100010011010111010101)} & \texttt{(0110)}\\
	6 & \texttt{(01000001110110010100101)} & \texttt{(11100010110101000111000)} & \texttt{(10010110110010111001011)} & \texttt{(1001)}\\
	7 & \texttt{(11100100101101110010101)} & \texttt{(11001101011001100101010)} & \texttt{(00100111000000010110011)} & \texttt{(0101)}\\
	8 & \texttt{(01001110101010101000110)} & \texttt{(01110011110110110001110)} & \texttt{(11010001000011101111110)} & \texttt{(1001)}\\
	9 & \texttt{(01110111111000111010001)} & \texttt{(11000000111010011100011)} & \texttt{(10001100101100101001000)} & \texttt{(1001)}\\
	10 & \texttt{(00011101100001100010000)} & \texttt{(10011101010000001110000)} & \texttt{(10011011010001111100011)} & \texttt{(0101)}\\
	11 & \texttt{(11111111110101011110101)} & \texttt{(11010011100101011111101)} & \texttt{(10001011101100011111001)} & \texttt{(0110)}\\
	12 & \texttt{(01000100110100111011001)} & \texttt{(10110111110000110101101)} & \texttt{(00001011111100101101110)} & \texttt{(1010)}\\
	13 & \texttt{(10010111000100001000011)} & \texttt{(01111100011110001100100)} & \texttt{(10101011110110000101101)} & \texttt{(0101)}\\
	14 & \texttt{(11100001111011111011011)} & \texttt{(00101000010110010100011)} & \texttt{(10101110101100100011110)} & \texttt{(1010)}\\
	15 & \texttt{(00111000011101010101010)} & \texttt{(01000111100000001111001)} & \texttt{(01011011110011100111000)} & \texttt{(1010)}\\
	16 & \texttt{(10110010010001101000110)} & \texttt{(01100011101110101000100)} & \texttt{(10111110010111010100100)} & \texttt{(1010)}\\
	17 & \texttt{(01111110110101010010100)} & \texttt{(00100010010010101110000)} & \texttt{(10000011010000001011011)} & \texttt{(0110)}\\
	18 & \texttt{(10011001010010000100000)} & \texttt{(10110110000111011111001)} & \texttt{(11001011101011100000111)} & \texttt{(0110)}\\
	19 & \texttt{(10111110111001000111001)} & \texttt{(00011100011101100010101)} & \texttt{(10001110111001110001110)} & \texttt{(0101)}\\
	20 & \texttt{(11100000011100111000111)} & \texttt{(01011101000011000010100)} & \texttt{(10001001101011101010111)} & \texttt{(0101)}\\
	21 & \texttt{(11000100000010010011001)} & \texttt{(10011111110011000001011)} & \texttt{(00010110110011100011111)} & \texttt{(1001)}\\
	22 & \texttt{(01000000000001111111111)} & \texttt{(11101101110111111011110)} & \texttt{(00111110100100101011101)} & \texttt{(1010)}\\
	23 & \texttt{(11101110101111101010001)} & \texttt{(00111000100001100101110)} & \texttt{(11011010101111011000010)} & \texttt{(1010)}\\
	24 & \texttt{(10111110010111110101001)} & \texttt{(00001100100001100110111)} & \texttt{(00010001001010011010011)} & \texttt{(0110)}\\
	25 & \texttt{(01101011111100100101100)} & \texttt{(10011111010101001011000)} & \texttt{(10100111101000111111000)} & \texttt{(0110)}\\
	26 & \texttt{(00110001110000010000100)} & \texttt{(01000110110110010010001)} & \texttt{(01001110000110001011000)} & \texttt{(1010)}\\
	27 & \texttt{(11111100110100101000011)} & \texttt{(01100110011000101111010)} & \texttt{(10000011111011010010111)} & \texttt{(1001)}\\
	28 & \texttt{(10111000111101100010111)} & \texttt{(10111011110011100110110)} & \texttt{(10101010001001000001101)} & \texttt{(1001)}\\
	29 & \texttt{(01110010100111101110101)} & \texttt{(01101110001001100110001)} & \texttt{(00110100000010110111000)} & \texttt{(0101)}\\
	30 & \texttt{(01101001100010111101111)} & \texttt{(11101000101110111111001)} & \texttt{(01101001110110011101001)} & \texttt{(1010)}\\
	31 & \texttt{(01100100110000110010010)} & \texttt{(01110111110111011110010)} & \texttt{(01011011010000110111101)} & \texttt{(0110)}\\
	32 & \texttt{(00111110101010110101110)} & \texttt{(11010100110110100100100)} & \texttt{(01110101000101100111101)} & \texttt{(1010)}\\
	33 & \texttt{(11010001010100000110111)} & \texttt{(00010111100111001101111)} & \texttt{(10010011000011011101111)} & \texttt{(0110)}\\
	34 & \texttt{(00011101000001110100111)} & \texttt{(00110000101010111100001)} & \texttt{(00110000111001000011010)} & \texttt{(1010)}\\
	35 & \texttt{(11010110011101000101011)} & \texttt{(01001110111001101001100)} & \texttt{(11100011000011101110110)} & \texttt{(0101)}\\
	36 & \texttt{(10010001001001011011010)} & \texttt{(10100110101000001110110)} & \texttt{(00000011101111110001111)} & \texttt{(1001)}\\
	37 & \texttt{(00010011001011000010101)} & \texttt{(01100100010011101111010)} & \texttt{(00111111011100000011110)} & \texttt{(0110)}\\
	38 & \texttt{(01100001111100111001010)} & \texttt{(10101111000011101100101)} & \texttt{(11111010010101100010110)} & \texttt{(0110)}\\
	39 & \texttt{(11001101110110000000111)} & \texttt{(11101011010011000101011)} & \texttt{(11100011011101000110011)} & \texttt{(0110)}\\
	40 & \texttt{(10111110110110010101000)} & \texttt{(01110001000111110000111)} & \texttt{(00101101110000010111111)} & \texttt{(1010)}\\
	41 & \texttt{(00000001001010101111101)} & \texttt{(01011000110110010010011)} & \texttt{(01111011101101010001001)} & \texttt{(1010)}\\
	42 & \texttt{(01110110010010100011011)} & \texttt{(01000011111001101101011)} & \texttt{(01000000101101110000011)} & \texttt{(0101)}\\
	43 & \texttt{(11101100100110110001001)} & \texttt{(01101011011000000000011)} & \texttt{(01000000000000000000000)} & \texttt{(1010)}\\\midrule
	\end{tabular}
	\end{adjustbox}
	\end{table}

	\addtocounter{table}{-1}
	\begin{table}
	\caption{(continued)}
	\centering
	\begin{adjustbox}{max width=\textwidth}
	\footnotesize
	\setlength{\tabcolsep}{4pt}
	\begin{tabular}{ccccc}\midrule
	$\mathcal{C}_{94,i}$ & $W_{94,j}$ & $\alpha$ & $\beta$ & $|\aut{\mathcal{C}_{94,i}}|$ \\\midrule
	1 & 1 & $3588$ & $-69$ & $23$\\
	2 & 1 & $3174$ & $-46$ & $23$\\
	3 & 1 & $3220$ & $-46$ & $23$\\
	4 & 1 & $3496$ & $-46$ & $23$\\
	5 & 1 & $3588$ & $-46$ & $23$\\
	6 & 1 & $3634$ & $-46$ & $23$\\
	7 & 1 & $2852$ & $-23$ & $23$\\
	8 & 1 & $2898$ & $-23$ & $23$\\
	9 & 1 & $2990$ & $-23$ & $23$\\
	10 & 1 & $3036$ & $-23$ & $23$\\
	11 & 1 & $3082$ & $-23$ & $23$\\
	12 & 1 & $3128$ & $-23$ & $23$\\
	13 & 1 & $3174$ & $-23$ & $23$\\
	14 & 1 & $3220$ & $-23$ & $23$\\
	15 & 1 & $3266$ & $-23$ & $23$\\
	16 & 1 & $3312$ & $-23$ & $23$\\
	17 & 1 & $3358$ & $-23$ & $23$\\
	18 & 1 & $3404$ & $-23$ & $23$\\
	19 & 1 & $3450$ & $-23$ & $23$\\
	20 & 1 & $3496$ & $-23$ & $23$\\
	21 & 1 & $3542$ & $-23$ & $23$\\
	22 & 1 & $3588$ & $-23$ & $23$\\
	23 & 1 & $3634$ & $-23$ & $23$\\
	24 & 1 & $2622$ & $0$ & $23$\\
	25 & 1 & $2668$ & $0$ & $23$\\
	26 & 1 & $2714$ & $0$ & $23$\\
	27 & 1 & $2760$ & $0$ & $23$\\
	28 & 1 & $2806$ & $0$ & $23$\\
	29 & 1 & $2852$ & $0$ & $23$\\
	30 & 1 & $2898$ & $0$ & $23$\\
	31 & 1 & $2944$ & $0$ & $23$\\
	32 & 1 & $2990$ & $0$ & $23$\\
	33 & 1 & $3082$ & $0$ & $23$\\
	34 & 1 & $3128$ & $0$ & $23$\\
	35 & 1 & $3174$ & $0$ & $23$\\
	36 & 1 & $3220$ & $0$ & $23$\\
	37 & 1 & $3266$ & $0$ & $23$\\
	38 & 1 & $3312$ & $0$ & $23$\\
	39 & 1 & $3358$ & $0$ & $23$\\
	40 & 1 & $3404$ & $0$ & $23$\\
	41 & 1 & $3450$ & $0$ & $23$\\
	42 & 1 & $3542$ & $0$ & $23$\\
	43 & 1 & $3634$ & $0$ & $2\cdot 23$\\\midrule
	\end{tabular}
	\end{adjustbox}
	\end{table}

\section{Conclusion}

In this work, we presented a new bordered construction for self-dual codes. We gave the necessary conditions for our method to produce self-dual codes over a finite commutative Frobenius ring of characteristic 2. Using our new bordered construction together with the well-known building-up and neighbour constructions, we were able to construct the following new singly-even binary self-dual codes:

\begin{enumerate}
\item[]\textbf{Codes of length 56:} We were able to construct new singly-even binary self-dual $[56,28,10]$ codes which have weight enumerator $W_{56,1}$ for:
	\begin{align*}
		\alpha\in\{-z:z=45,\lb47,\lb49,\lb50,\lb54,\lb55\}
	\end{align*}
and weight enumerator $W_{56,2}$ for:
	\begin{align*}
		\alpha\in\{-z:z=50,\lb...,\lb53,\lb55\}.
	\end{align*}

\item[]\textbf{Code of length 62:} We were able to construct a new binary self-dual $[62,31,12]$ code which has weight enumerator $W_{62,2}$ for:
	\begin{align*}
		\alpha=2.
	\end{align*}

\item[]\textbf{Code of length 78:} We were able to construct a new binary self-dual $[78,39,14]$ code which has weight enumerator $W_{78,1}$ for:
	\begin{align*}
		\beta=0 \ \text{and} \ \alpha=-76.
	\end{align*}

\item[]\textbf{Codes of length 92:} We were able to construct new binary self-dual $[92,46,16]$ codes which have weight enumerator $W_{92,1}$ for:
	\begin{align*}
		\beta=0 \ \text{and} \ \alpha\in\{807,862,1038\}.
	\end{align*}

\item[]\textbf{Codes of length 94:} We were able to construct new binary self-dual $[94,47,16]$ codes which have weight enumerator $W_{94,1}$ for:
	\begin{align*}
		&\beta=-69 \ \text{and} \ \alpha\in\{46z:z=78\},\\
		&\beta=-46 \ \text{and} \ \alpha\in\{46z:z=69,\lb70,\lb76,\lb78,\lb79\},\\
		&\beta=-23 \ \text{and} \ \alpha\in\{46z:z=62,\lb63,\lb65,\lb...,\lb79\},\\
		&\beta=0 \ \text{and} \ \alpha\in\{46z:z=57,\lb...,\lb65,\lb67,\lb...,\lb75,\lb77,\lb79\}.
	\end{align*}
\end{enumerate}

A suggestion for future work would be to consider our bordered construction with the $\lambda$-circulant matrices $A, B$ and $C$ being replaced with some other matrices, for example, matrices that come from group rings. This would give one many more possible choices for the matrices $A, B$ and $C$ which could lead to finding more new binary self-dual codes of different lengths.

\bibliographystyle{plainnat}
\bibliography{paper4}
\end{document}